\newcommand{\betweenSymbol}{\raisebox{0.4ex}{\ensuremath{\mathord{\includegraphics[width=0.7em]{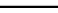}}}}}
\newcommand{\bT}[3]{{#1} \betweenSymbol {#2} \betweenSymbol {#3}}
\newcommand{\congT}[2]{{#1} \mathbin{\equiv} {#2}}
\DeclareMathOperator{\coll}{Col}
\newcommand{\col}[3]{\coll \, {#1} \, {#2} \, {#3}}
\newcommand{\para}[2]{{#1}\mathbin{\parallel} {#2}}
\newcommand{\imp}{\Rightarrow}
\newtheorem{theorem}{Theorem}
\title{Towards an Independent Version of\\ Tarski's System of Geometry}
\author{Pierre Boutry
\institute{Centre Inria d’Universit\'e C\^ote d’Azur, Sophia Antipolis, France}
\email{pierre.boutry@inria.fr}
\and
St\'ephane Kastenbaum
\institute{No affiliation}
\email{stephane.kastenbaum@gmail.com}
\and
Cl\'ement Saintier
\institute{No affiliation}
\email{clement.saintier@gmail.com}}
\newcommand{\authorrunning}{P. Boutry et al.}
\begin{document}

\maketitle

\setcounter{footnote}{0}% Reset footnote counter

\begin{abstract}
In 1926--1927, Tarski designed a set of axioms for Euclidean geometry which reached its final form in a manuscript by Schwabh\"auser, Szmielew and Tarski in 1983.
The differences amount to simplifications obtained by Tarski and Gupta.
Gupta presented an independent version of Tarski's system of geometry, thus establishing that his version could not be further simplified without modifying the axioms.
To obtain the independence of one of his axioms, namely Pasch's axiom, he proved the independence of one of its consequences: the previously eliminated symmetry of betweenness.
However, an independence model for the non-degenerate part of Pasch's axiom was provided by Szczerba for another version of Tarski's system of geometry in which the symmetry of betweenness holds.
This independence proof cannot be directly used for Gupta's version as the statements of the parallel postulate differ.

In this paper, we present our progress towards obtaining an independent version of a variant of Gupta's system.
Compared to Gupta's version, we split Pasch's axiom into this previously eliminated axiom and its non-degenerate part and change the statement of the parallel postulate.
We verified the independence properties by mechanizing counter-models using the Coq proof-assistant.
\end{abstract}

\section{Introduction}

The independence\footnote{We recall that an axiom is said to be independent from a set of axiom if it is not derivable from the axioms in this set.} of axioms for geometry has often been an important topic in the field of geometry.
For centuries, many mathematicians believed that Euclid’s fifth postulate was rather a theorem which could be derived from the first four of Euclid’s postulates.
History is rich with incorrect proofs of Euclid’s fifth postulate.
In 1763, Kl\"ugel provided, in his dissertation, a survey of about 30 attempts to \enquote{prove Euclid’s parallel postulate}~\cite{klugel63}.
The question was finally settled in 1832 and 1840, when Bolyai~\cite{bolyai1832appendix} and Lobachevsky~\cite{lobachevsky1840geometrische} exhibited models of hyperbolic geometry, thus establishing that this postulate was independent.
Later, Hilbert dedicated the second section of his famous \emph{Grundlagen der Geometrie}~\cite{hilbert_les_1971} to independence properties.
Then, when working out the final version~\cite{givant99} of the axioms for \emph{Metamathematische Methoden in der Geometrie}~\cite{tarski83}, commonly referred to as SST, independence results proved very helpful.

So it is a surprise that, now that we have access to tools like proof assistants which we believe to be perfectly suited for the task, the only independence to be mechanized was the one for Euclid’s fifth postulate~\cite{narboux:hal-01779452}.
To the best of our knowledge, the most recent work on the topic is the formalization of the Poincar\'e disk model in Isabelle/HOL~\cite{simic-poincare}.

In this paper, we study independence properties linked to SST~\cite{tarski83}.
SST has the advantage of being expressed in the first-order language rather than natural language which leaves room for interpretation leading to possible problems~\cite{braun:hal-01332044}.\footnote{A possible interpretation of Hilbert's axiom could lead to a degenerate model for first two groups of Hilbert’s axioms.}
There are several ways to prove independence results~\cite{beeson_herbrands_2015}.
Here we focus on independence through counter-model, i.e. constructing a model where the axiom to be proven independent will not hold while all the others will.

In 1965, Gupta presented an independent version of Tarski's system of geometry~\cite{gupta_contributions_1965}.
To obtain the independence of one of his axioms, namely Pasch's axiom, he proved the independence of one of its consequences: the previously eliminated symmetry of betweenness.
However, an independence model for the non-degenerate part of Pasch's axiom was provided by Szczerba for another version of Tarski's system of geometry in which the symmetry of betweenness holds~\cite{szczerba_independence_1970}.
This independence proof cannot be directly used for Gupta's version as the statements of the parallel postulate differ.
This can be remedied by carefully choosing the statement of the parallel postulate amongst the ones known to be equivalent~\cite{Boutry2017}.
We aim to verify that splitting Pasch's axiom into its non-degenerate part and the symmetry of betweenness in addition to changing the statement of the parallel postulate allows to obtain a system that is still independent.
So we go in the opposite direction of what Makarios did by removing the need of the reflexivity properties for congruence thanks to a modification to the five-segment property~\cite{makarios_further_2013}.
Indeed, our view is that an axiom should capture a limited and well-defined property, instead of trying to minimize the needed number of axioms at any cost.

We remark that a small change in the statement of an axiom can change whether or not it holds in a specific model.
This makes a computer very well suited to the verification that an axiom holds in a model.
So we chose to mechanize the various counter-models needed for this task in the Coq proof-assistant~\cite{the_coq_development_team_2018_1174360}.

The rest of the paper is structured as follows.
First, in Sec.~\ref{tarski}, we present the system we will be working on throughout the rest of this paper.
Then, in Sec.~\ref{model}, we show how to build a model of Tarski's axiom.
Finally, before concluding on the achieved results, we present an example of independence proof in Sec.~\ref{klein}.

\section{A Variant of Tarski's System of Geometry}\label{tarski}

In this section, we start by recalling the axioms of Tarski's system of geometry.
Then, we present the modification Gupta made to it to obtain a fully independent system~\cite{gupta_contributions_1965}.
Finally, we describe how to modify his system to combine his results and the ones from Szczerba~\cite{szczerba_independence_1970}.

\subsection{Tarski's System of Geometry}

Tarski's axiom system is based on a single primitive type depicting points and two predicates, namely congruence and betweenness.
$\congT{AB}{CD}$ states that the segments $\overline{AB}$ and $\overline{CD}$ have the same length.
$\bT{A}{B}{C}$ means that $A$, $B$ and $C$ are collinear and $B$ is between $A$ and $C$ (and $B$ may be equal to $A$ or $C$).
For an explanation of the axioms and their history see~\cite{givant99}.
Table~\ref{tarski-axiomatique-formalisee} lists the axioms for planar Euclidean geometry.

\begin{table}
\begin{center}
\begin{tabular}{crl}
A1 & Symmetry & $\congT{AB}{BA}$\\
A2 & Pseudo-Transitivity & $\congT{AB}{CD} \land \congT{AB}{EF} \imp \congT{CD}{EF}$\\
A3 & Cong Identity & $\congT{AB}{CC} \imp A=B$\\
A4 & Segment construction &$\exists E, \bT{A}{B}{E} \land \congT{BE}{CD}$\\
A5 & Five-segment & $\congT{AB}{A'B'} \land \congT{BC}{B'C'} \land$\\
   &               & $\congT{AD}{A'D'} \land \congT{BD}{B'D'} \land$\\
   &               & $\bT{A}{B}{C} \land \bT{A'}{B'}{C'} \land A \neq B \imp \congT{CD}{C'D'}$\\
A6 & Between Identity & $\bT{A}{B}{A} \imp A=B$\\
A7 & Inner Pasch & $\bT{A}{P}{C} \land \bT{B}{Q}{C} \imp \exists X,\bT{P}{X}{B} \land \bT{Q}{X}{A}$\\
A8 & Lower Dimension &$\exists A B C, \lnot \bT{A}{B}{C} \land \lnot \bT{B}{C}{A} \land \lnot \bT{C}{A}{B}$\\
A9 & Upper Dimension &$\congT{AP}{AQ} \land \congT{BP}{BQ} \land \congT{CP}{CQ} \land P \neq Q \imp$\\
&&$\bT{A}{B}{C} \lor \bT{B}{C}{A} \lor \bT{C}{A}{B}$\\
A10 & Euclid & $\bT{A}{D}{T} \land \bT{B}{D}{C} \land A \neq D \imp$\\
&&$\exists X Y, \bT{A}{B}{X} \land \bT{A}{C}{Y} \land \bT{X}{T}{Y}$\\
A11 & Continuity & $(\exists A, (\forall X Y, \Xi (X) \land \Upsilon (Y) \imp \bT{A}{X}{Y})) \imp$\\
&&$~\exists B, (\forall X Y, \Xi (X) \land \Upsilon (Y) \imp \bT{X}{B}{Y})$
\end{tabular}
\caption{Tarski's axiom system for planar Euclidean geometry.}
\label{tarski-axiomatique-formalisee}
\end{center}
\end{table}

\subsection{Gupta's Contribution}

The problem of the independence of Tarski's axiom system, as defined in Table~\ref{tarski-axiomatique-formalisee}, remains open.
Let us introduce the modifications Gupta made to it to obtain an independent system.
He reintroduced the inner transitivity of betweenness A15\footnote{We number them as in~\cite{givant99}.} in Table~\ref{tarski-variant}.
Having added this axiom, the identity axiom for betweenness A6 became a theorem and could then be removed from the system.
Finally A2, A9 and A11 are replaced by A2', A9' and A11'.
We omit the details of how to mechanize in Coq that this system, consisting of A1, A2', A3-A5, A7, A8, A9', A10, A11' and A15,\footnote{Actually the statement for A7 differs in~\cite{givant99} but the change is not important here.} and Tarski's system are equivalent.
Gupta proves that this system is independent.
To prove that A7 is independent in this system, he shows that A14, a consequence of A7 in this system, does not hold.
However, Szczerba~\cite{szczerba_independence_1970} found that A7 does not hold when A14 and all the other axioms, with the exception of A10, in Gupta's system do.
So this would suggest that A7 can be split into A14 and a variant of A7 while still having an independent system.

\begin{table}
\begin{center}
\begin{tabular}{crl}
A0 & Point equality decidability & $X = Y \lor X \not = Y$\\
A2' & Pseudo-Transitivity & $\congT{AB}{EF} \land \congT{CD}{EF} \imp \congT{AB}{CD}$\\
A7' & Inner Pasch & $\bT{A}{P}{C} \land \bT{B}{Q}{C} \land$\\
&&$A \neq P \land  P \neq C \land B \neq Q \land Q \neq C \land$\\
&&$\lnot \left( \bT{A}{B}{C} \lor \bT{B}{C}{A} \lor \bT{C}{A}{B} \right) \imp$\\
&&$\exists X,\bT{P}{X}{B} \land \bT{Q}{X}{A}$\\
A9' & Upper Dimension &$\congT{AP}{AQ} \land \congT{BP}{BQ} \land \congT{CP}{CQ} \land$\\
&&$P \neq Q \land A \neq B \land A \neq C \land B \neq C \imp$\\
&&$\bT{A}{B}{C} \lor \bT{B}{C}{A} \lor \bT{C}{A}{B}$\\
A10' & Proclus & $\para{AB}{CD} \land \col{A}{B}{P} \land \lnot \col{A}{B}{Q} \imp$\\
&&$\exists Y, \col{C}{D}{Y} \land \col{P}{Q}{Y}$\\
A11' & Continuity & $(\exists A, (\forall X Y, \Xi (X) \land \Upsilon (Y) \imp \bT{A}{X}{Y})) \imp$\\
&&$~\exists B, (\forall X Y, \Xi (X) \land \Upsilon (Y) \imp$\\
&&$~~~~~~~~X = B \lor B = Y \lor \bT{X}{B}{Y})$\\
A14 & Between Symmetry & $\bT{A}{B}{C} \imp \bT{C}{B}{A}$\\
A15 & Between Inner Transitivity & $\bT{A}{B}{D} \land \bT{B}{C}{D} \imp \bT{A}{B}{C}$
\end{tabular}
\caption{Added axioms to Tarski's system of geometry.}
\label{tarski-variant}
\end{center}
\end{table}

\subsection{An Independent System for Planar Geometry}\label{independent-tarski}

The system that we want to prove independent is very close to the one Gupta studied in his thesis~\cite{gupta_contributions_1965}.
We split Pasch's axiom A7 into its non-degenerate part A7' and A14, change the version of the parallel postulate A10 and add one axiom (for reasons explained later).
A7' excludes from A7 the degenerate cases where the triangle $ABC$ is flat or when $P$ or $Q$ are respectively not strictly between $A$ and $C$ or $B$ and $C$.
We cannot use A10 as it does not hold in the counter-model found by Szczerba~\cite{szczerba_independence_1970}.
We chose Proclus postulate,\footnote{$\col{A}{B}{C}$ and $\para{AB}{CD}$ denotes that $A$, $B$ and $C$ are collinear and that lines $AB$ and $CD$ are parallel according to the definitions given in SST~\cite{tarski83}.} denoted as A10' in Table~\ref{tarski-variant}, verified to be equivalent to it when assuming A0-A9, using Coq~\cite{Boutry2017}, as it holds in all the counter-models provided by Gupta as well as in the one found by Szczerba, thanks to Theorem~1 in~\cite{szczerba_independence_1970}.
Finally, the formal development found in SST~\cite{tarski83} is essentially classical due to the many case distinctions found in the proofs of its lemmas.
However, the decidability of point equality is sufficient to obtain the arithmetization of geometry in an intuitionistic setting~\cite{BOUTRY2018}.
So we add the decidability of point equality A0 so that we can work in an intuitionistic setting.
The reader not familiar with the difference between classical and intuitionistic logic may refer to~\cite{beeson2015b}.
This system, consisting of A0, A1, A2', A3-A5, A7', A8, A9', A10', A11', A14 and A15, and Tarski's system are equivalent.
Again, we do not detail how to mechanize this fact in Coq.

\section{A model of Tarski's system of geometry}\label{model}

In this section, we present our proof that Cartesian planes over a Pythagorean ordered field form a model of the variant of Tarski's system of geometry that we have introduced in the previous section.
First, we present the structure that we used to define this model.
Then we define the model that we used, that is, the way we instantiated the signature of this system.
Finally, we detail the proofs of some of the more interesting axioms.

\subsection{The \textit{Real Field} Structure}

The structure that was used to define this model was built by Cohen~\cite{cohen_formalized_2012}.
The \textit{real field} structure results of the addition of operators to a discrete\footnote{Discrete fields are fields with a decidable equality.} field: two boolean comparison functions (for strict and non-strict order) and a norm operator.
Elements of this \textit{real field} structure verify the axioms listed in Table~\ref{real-field-axioms}.
Finally, the elements of a \textit{real field} structure are all comparable to zero.
We should remark that this field is not necessarily Pythagorean.
In fact, there is no defined structure in the \textit{Mathematical Components} library~\cite{assia_mahboubi_2022_7118596} for Pythagorean fields.
This can however be added much more easily than before thanks to the recent modification of the \textit{Mathematical Components} library to make use of the \textit{Hierarchy Builder}~\cite{cohen:hal-02478907}.
However, the Pythagorean property is only required for the proof of the segment construction axiom A4.
So we chose to prove that this axiom holds in our model by admitting an extra axiom which was defined in this library: the real closed field axiom.
It states that intermediate value property holds for polynomial with coefficients in the field.
While it is much stronger than Pythagoras' axiom, we only used it to be able to define the square root of a number which is a sum of squares and would therefore have a square root in a Pythagorean field.
Finally, we did not yet prove that $A11$ holds in our model since it would require a much more involved effort.
Indeed, this is similar to verifying that Tarski's system of geometry admits a quantifier elimination procedure.

\begin{table}
\begin{center}
\begin{tabular}{rl}
Subadditivity of the norm operator & $| x + y | \le |x| + |y|$\\
Compatibility of the addition with the strict comparison & $0 < x \land 0 < y \imp 0 < x + y$\\
Definiteness of the norm operator & $| x | = 0 \imp x = 0$\\
Comparability of positive numbers & $ 0 \le x \land 0 \le y \imp (x \le y) || (y \le x)$\\
The norm operator is a morphism for the multiplication & $| x * y | = |x| * |y|$\\
Large comparison in terms of the norm & $(x \le y) = (| y - x | == y - x)$\footnotemark\\
Strict comparison in terms of the large comparison & $(x < y) = (y ~ != x) \&\& (x \le y)$
\end{tabular}
\caption{Axioms of the \textit{real field} structure.}
\label{real-field-axioms}
\end{center}
\end{table}
\footnotetext{$==$ denotes the boolean equality test for the elements of the field.}

\subsection{The Model}

Let us now define our model.
Being based on a single primitive type and two predicates, the signature of Tarski's system of geometry is rather simple.
However, this system has the advantage of having a $n$-dimensional variant.
To obtain this variant, one only needs to change the dimension axioms.
So far, we have restricted ourselves to the planar version of this system.
With a view to extend the \textit{GeoCoq} library to its $n$-dimensional variant, we wanted to define a model in which we could prove all but the dimension axioms in an arbitrary dimension to be able to construct a model of the $n$-dimensional variant by only proving the new dimension axioms.
Hence we chose to define \texttt{Tpoint} as a vector of dimension $n+1$ with coefficient in the \textit{real field} structure $\mathbb{F}$ (we used the \textit{real field} structure for all the development with the exception of the proof of the segment construction axiom) for a fixed integer $n$, that is \texttt{'rV[R]\_(n.+1)}.
We adopted Gupta's definition~\cite{gupta_contributions_1965} for the congruence \texttt{cong}, namely that $\congT{AB}{CD}$ if the squares of the Euclidean norms of $B-A$ and $D-C$ are equal.
Actually Gupta also proved that any model of the $n$-dimensional variant of Tarski's system of geometry is isomorphic to his model.
He defined that $\bT{A}{B}{C}$ holds if and only if there exists a $k \in \mathbb{F}$ such that $0 \le k \le 1$ and $B-A = k (C-A)$.
In fact, if such a $k$ exists, it can be computed.
By letting $A = \left( a_i \right)_{1 \le i \le n+1}$, $B = \left( b_i \right)_{1 \le i \le n+1}$ and $C = \left( c_i \right)_{1 \le i \le n+1}$, if $A \neq C$ then there exists a $i \in \mathbb{N}$ such that $1 \le i \le n+1$ and $a_i \neq c_i$ and in this case we set $k$ to $\frac{b_i - a_i}{c_i - a_i}$ and if $A = C$ we set $k$ to zero.
Therefore we defined a function \texttt{ratio} that computes the possible value for $k$, thus allowing us to define the betweenness by the boolean equality test.
This was actually important as it permitted to directly manipulate the definition for betweenness by rewriting since we defined it as a boolean test.
Finally, as it was often necessary to distinguish whether or not $\bT{A}{B}{C}$ holds due to a degeneracy, we split the definition \texttt{bet} of the betweenness into two predicates: the first one, \texttt{betS}, capturing the general case of $k$ being strictly between $0$ and $1$ and the second one, \texttt{betE}, capturing the three possible degenerate cases, namely either $A = B$, $B = C$ or $A = B$ and $B = C$.

Formally, we consider the following model:

\begin{CoqVerbatim}
Variable R : realFieldType.
Variable n : nat.

Implicit Types (a b c d : 'rV[R]_(n.+1)).

Definition cong a b c d := (b - a) *m (b - a)^T == (d - c) *m (d - c)^T.

Definition betE a b c := [ || [ && a == b & b == c ], a == b | b == c ].

Definition ratio v1 v2 :=
  if [pick k : 'I_(n.+1) | v2 0 k != 0] is Some k
  then v1 0 k / v2 0 k else 0.

Definition betR a b c := ratio (b - a) (c - a).

Definition betS a b c (r := betR a b c) :=
  [ && b - a == r *: (c - a), 0 < r & r < 1].

Definition bet a b c := betE a b c || betS a b c.
\end{CoqVerbatim}

\subsection{Proof that the Axioms hold in the Model}

Now that we have defined the model, we focus on the proof that the axioms of the system from Sec.~\ref{independent-tarski} hold in this model.
However, we omit the details of the proofs for axioms A1, A2', A3 and A14 since they are rather straightforward.
For the same reason, we do not cover the decidability of point equality A0.

Let us start by focusing on axioms A7' and A15 as the proofs that they hold in our model are quite similar.
In the case of axiom A15 we know that $\bT{A}{B}{D}$ and $\bT{B}{C}{D}$ so let $k_1 \in \mathbb{F}$ be such that $0 < k_1 < 1$ and $B - A = k_1 (D - A)$ (the degenerate case of this axiom is trivial so we only consider the general case) and $k_2 \in \mathbb{F}$ be such that $0 < k_2 < 1$ and $C - B = k_2 (D - B)$.
In order to prove that $\bT{A}{B}{C}$ we need to find a $k \in \mathbb{F}$ such that $0 < k < 1$ and $B - A = k (C - A)$.
By calculation we find that $k = \frac{k_1}{k_1 + k_2 - k_1 k_2}$ and we can verify that $0 < k < 1$.
In a similar way, for axiom A7', we know that $\bT{A}{P}{C}$ and $\bT{B}{Q}{C}$ so let $k_1 \in \mathbb{F}$ be such that $0 < k_1 < 1$ and $P - A = k_1 (C - A)$ (the hypotheses imply that $0 < k_1 < 1$ because $A \neq P$ and $P \neq C$) and $k_2 \in \mathbb{F}$ be such that $0 < k_2 < 1$ and $Q - B = k_2 (C - B)$.
In order to prove that there exists a point $X$ such that $\bT{P}{X}{B}$ and $\bT{Q}{X}{A}$ we need to find a $k_3 \in \mathbb{F}$ and a $k_4 \in \mathbb{F}$ such that $0 < k_3 < 1$, $0 < k_4 < 1$ and $k_3 (B - P) + P = k_4 (A - Q) + Q$.
By calculation we find that $k_3 = \frac{k_1 (1 - k_2)}{k_1 + k_2 - k_1 k_2}$ and $k_4 = \frac{k_2 (1 - k_1)}{k_1 + k_2 - k_1 k_2}$ and we can verify that $0 < k_3 < 1$ and $0 < k_4 < 1$.
In both of these proof, the ratios are almost identical to the point that it suffices to prove the following lemma:

\begin{CoqVerbatim}
Lemma ratio_bet a b c k1 k2 k3 :
  0 < k1 -> 0 < k2 -> k1 < 1 -> 0 < k3 -> k3 < k1+k2-k1*k2 ->
  b - a == ((k1+k2-k1*k2)/k3)^-1 *: (c - a) -> bet a b c.
\end{CoqVerbatim}

It allows to prove quite easily both of these axioms.
For axiom A4, we proceeded in a analogous way: it suffices to set the point $E$ that can be constructed using this axiom to $\frac{\| D - C \|}{\| B - A \|} (B - A) + A$ and to verify this point satisfies the desired properties by calculation.

We now turn to axiom A5.
We followed Makarios' approach for the proof that this axiom holds in our model~\cite{makarios}.
In his proof he used the cosine rule: in a triangle whose vertices are the vectors $A$, $B$ and $C$ we have
$$\| C - B \|^2 = \| C - A \|^2 + \| B - A \|^2 - 2 (B - A) \cdot (C - A).$$
As noted by Makarios, using the cosine rule allows to avoid defining angles and properties about them.
Applying the cosine rule for the triangles $BCD$ and $B'C'D'$ allows to prove that $\| D - C \|^2 = \| D' - C' \|^2$ by showing that
$$(C - B) \cdot (D - B) = (C' - B') \cdot (D' - B')$$
which can be justified, by applying the cosine rule again, this time in the triangles $ABD$ and $A'B'D'$, if
$$\| D - A \| - \| D - B \| - \| A - B \| = \| D' - A' \| - \| D' - B' \| - \| A' - B' \|$$
which we know from the hypotheses and if the ratios corresponding to the betweenness $\bT{A}{B}{C}$ and $\bT{A'}{B'}{C'}$ are equal which can be obtained by calculation.

Next, let us consider axiom A10.\footnote{
Similarly to A7, when we were proving Euclid's axiom, we realized that the same kind of distinctions was also needed.
The degenerate cases are implied by the other betweenness axioms so it suffices to show that A10 holds when the angle $\angle BAC$ is non-flat and when $D$ is different from $T$.}
From the hypotheses we have two ratios $k_1 \in \mathbb{F}$ and $k_2 \in \mathbb{F}$ such that $0 < k_1 < 1$, $0 < k_2 < 1$, $D - A = k_1 (T - A)$ and $D - B = k_2 (C - B)$.
Using these ratios, it suffices to define $X$ such that $B - A = k_1 (X - A)$ and $Y$ such that $C - A = k_1 (Y - A)$.
So we know by construction that $\bT{A}{B}{X}$ and $\bT{A}{C}{Y}$ and we easily get that $T - X = k_2 (Y - X)$ by calculation, thus proving that $\bT{X}{T}{Y}$.
Since A10 and A10' are equivalent when A0, A1, A2', A3-A5, A7', A8, A9', A11', A14 and A15 hold, this allows to prove that A10' holds in our model.

Finally the remaining two axioms are treated in a slightly different setting since they are the dimension axioms.
Formally we fix the value of $n$ to $1$.
In order to simplify the many rewriting steps needed for these proofs we started by establishing the following two lemmas:

\begin{CoqVerbatim}
Definition sqr_L2_norm_2D a b :=
  (b 0 0 - a 0 0) ^+ 2 + (b 0 1 - a 0 1) ^+ 2.

Lemma congP a b c d :
  reflect (sqr_L2_norm_2D a b = sqr_L2_norm_2D c d) (cong a b c d).

Lemma betSP' a b c (r := betR a b c) :
  reflect ([ /\ b 0 0 - a 0 0 = r * (c 0 0 - a 0 0),
               b 0 1 - a 0 1 = r * (c 0 1 - a 0 1), 0 < r & r < 1])
          (betS a b c).
\end{CoqVerbatim}

The reader familiar with \textsc{SSReflect} will have recognized the \texttt{reflect} predicate, described in~\cite{cohen_formalized_2012} for example.
In practice, these lemmas allowed to spare many steps that would have been repeated in almost every proof concerning the dimension axioms.
It was much more straightforward to prove that axiom A8 holds in our model than for axiom A9'.
In fact, it is enough to find three non-collinear points.
We simply took the points $(0, 0)$, $(0, 1)$ and $(1, 0)$:

\begin{CoqVerbatim}
Definition row2 {R : ringType} (a b : R) : 'rV[R]_2 :=
  \row_p [eta \0 with 0 |-> a, 1 |-> b] p.

Definition a : 'rV[R]_(2) := row2 0 0.
Definition b : 'rV[R]_(2) := row2 0 1.
Definition c : 'rV[R]_(2) := row2 1 0.
\end{CoqVerbatim}

It was then an easy matter to verify that axiom A8 holds in our model.
For axiom A9, the idea of the proof that we formalized was to first show that, by letting $M$ be the midpoint of $P$ and $Q$, the equation $(x_P - x_M) (x_M - x_X) + (y_P - y_M) (y_M - y_X) = 0$, capturing the property that the points $P$, $M$, and $X$ form a right angle with the right angle at vertex $M$, was verified when $X$ would be equal to $A$, $B$ or $C$:

\begin{CoqVerbatim}
Lemma cong_perp (a p q : 'rV[R]_(2)) (m := (1 / (1 + 1)) *: (p + q)) :
  cong a p a q ->
  (p 0 0 - m 0 0) * (m 0 0 - a 0 0) +
  (p 0 1 - m 0 1) * (m 0 1 - a 0 1) = 0.
\end{CoqVerbatim}

Next, we demonstrated that for three points $A$, $B$ and $C$ verifying $(x_A - x_B) (y_B - y_C) - (y_A - y_B) (x_B - x_C) = 0$ are collinear in the sense that $\bT{A}{B}{C} \lor \bT{B}{C}{A} \lor \bT{C}{A}{B}$:

\begin{CoqVerbatim}
Lemma col_2D a b c :
  (a 0 0 - b 0 0) * (b 0 1 - c 0 1) ==
  (a 0 1 - b 0 1) * (b 0 0 - c 0 0) ->
  (bet a b c \/ bet b c a \/ bet c a b).
\end{CoqVerbatim}

Using the equations implied by \verb|cong_perp| we could derive that
$$(x_P - x_M) (y_M - y_P) \left( (x_A - x_B) (y_B - y_C) - (y_A - y_B) (x_B - x_C) \right) = 0.$$
We were then left with three cases: either the abscissas of $P$ and $M$ are equal in which case the ordinate of $A$, $B$ and $C$ were equal thus sufficing to complete the proof, or the ordinates of $P$ and $M$ are equal in which case the abscissas of $A$, $B$ and $C$ were equal thus completing the proof, or $(x_A - x_B) (y_B - y_C) - (y_A - y_B) (x_B - x_C) = 0$ corresponding to the lemma that we had proved and again allowing to conclude.

Putting everything together, we could prove that Cartesian planes over a Pythagorean ordered field form a model of the variant of Tarski's system of geometry, thus proving the satisfiability of the theory.\footnote{\texttt{Tarski\_euclidean} is the type class that captures the theory consisting of axioms A0-A10.}

\begin{CoqVerbatim}
Global Instance Rcf_to_T2D : Tarski_2D Rcf_to_T_PED.

Global Instance Rcf_to_T_euclidean : Tarski_euclidean Rcf_to_T_PED.
\end{CoqVerbatim}

\section{An Example of Independence Proof}\label{klein}

To illustrate how we obtain formal proofs of independence we present an example.
We start by defining the counter-model we will use to prove the independence of axiom A10'.
We then provide the sketch of the formal proof.

\subsection{Klein's Model}

To prove Euclid’s Parallel Postulate independent from the other axiom we work in Klein's model as defined in SST~\cite{tarski83}:

\begin{CoqVerbatim}
Variable R : realFieldType.
Variable n : nat.

Definition Vector := 'rV[R]_(n.+1).
Definition Point : Type := {p : Vector | (p *m p^T) 0 0 < 1}.
Notation "#" := proj1_sig.

Implicit Types (a b c d : Point).
Implicit Types (v w x y : Vector).

Definition bet' a b c := bet (#a) (#b) (#c).

Definition omd_v v w := (1 - (v *m (w)^T) 0 0).
Definition cong_v v w x y :=
  (omd_v v w)^+2/(omd_v v v * omd_v w w) ==
  (omd_v x y)^+2/(omd_v x x * omd_v y y).
Definition cong' a b c d := cong_v (#a) (#b) (#c) (#d).
\end{CoqVerbatim}

Here, \texttt{Point} is the type of \texttt{Vector}, vectors of dimension $n+1$ with coefficient in the \textit{real field} structure, lying inside the unit disk and \texttt{\#} the projection allowing to recover the coordinate part of this dependent type.
In Klein's model, $b$ is said to be between $a$ and $c$ iff their coordinate parts can be said to be \texttt{bet} in the model from Sec.~\ref{model} and line-segments $\overline{ab}$ and $\overline{cd}$ are said to be congruent iff 
\begin{small}
$$\frac{\left( 1 - \# a \cdot \# b \right)^2}{\left( 1 - \# a \cdot \# a \right) \left( 1 - \# b \cdot \# b \right)} = \frac{\left( 1 - \# c \cdot \# d \right)^2}{\left( 1 - \# c \cdot \# c \right) \left( 1 - \# d \cdot \# d \right)}$$
\end{small}
where $\cdot$ denotes the dot product of two vectors.

\subsection{Independence of Euclid’s Parallel Postulate via Klein's Model}

Here we only detail the proof that A10' does not hold in this model.
Mechanizing the following proof sketch allows to derive.

\begin{CoqVerbatim}
Lemma euclid : ~ euclidP (@Point R 1) (@bet' R 1).
\end{CoqVerbatim}

To make sure that we did not introduce any change in the axioms between the various models we relied on predicates such as \texttt{euclidP}, which depend on possibly the type for points and the predicate(s) for betweenness and/or congruence.

\begin{theorem}\label{klein-parallel}
Axiom A10' does not hold in Klein's model.
\end{theorem}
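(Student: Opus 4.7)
Since \texttt{bet'} in Klein's model is defined from the underlying affine \texttt{bet}, collinearity of disk-points in the model coincides with the usual Euclidean collinearity of their coordinate vectors, and two lines are parallel in the SST sense iff, as Euclidean chords, they do not meet inside the open unit disk. Refuting A10' thus reduces to exhibiting six concrete disk-points $A,B,C,D,P,Q$ such that the Euclidean lines $\overline{AB}$ and $\overline{CD}$ are parallel, $P$ is (and $Q$ is not) Euclidean-collinear with $A$ and $B$, and yet the unique Euclidean intersection of the lines $\overline{CD}$ and $\overline{PQ}$ falls outside the open unit disk.

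\textbf{Concrete witness.} I would take $A=(-\tfrac{1}{2},0)$, $B=(\tfrac{1}{2},0)$, $C=(-\tfrac{1}{4},\tfrac{1}{2})$, $D=(\tfrac{1}{4},\tfrac{1}{2})$, $P=(0,0)$ and $Q=(\tfrac{3}{4},\tfrac{1}{4})$. A direct arithmetic check shows each has squared Euclidean norm strictly less than $1$, so each belongs to $\mathtt{Point}$. The chords $\overline{AB}$ and $\overline{CD}$ are the horizontal segments on $y=0$ and $y=\tfrac{1}{2}$, hence Euclidean-parallel, so they share no disk-point and $\para{AB}{CD}$ holds. Clearly $P\in\overline{AB}$, while the ordinate of $Q$ is $\tfrac{1}{4}\neq 0$, so $\lnot\col{A}{B}{Q}$.

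\textbf{Main step and obstacle.} Suppose for contradiction that some $Y\in\mathtt{Point}$ satisfied $\col{C}{D}{Y}\wedge\col{P}{Q}{Y}$. The first conjunct forces the ordinate of $Y$ to equal $\tfrac{1}{2}$; the second forces $Y$ to lie on the Euclidean line $y=x/3$. Together these determine $Y=(\tfrac{3}{2},\tfrac{1}{2})$, whose squared norm $\tfrac{5}{2}>1$ contradicts the disk constraint built into $\mathtt{Point}$. The technical obstacle is the same one already encountered for A9' in Section~\ref{model}: the predicate $\coll$ unfolds to a three-fold disjunction over \texttt{bet}, so each collinearity hypothesis (and the negated one) has to be bridged to the polynomial cross-product equation of \texttt{col\_2D}, and the negative fact $\lnot\col{A}{B}{Q}$ has to be extracted from an inequality on coordinates via the contrapositive. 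Once these reflection steps are assembled — reusing \texttt{col\_2D} together with the reflection lemmas \texttt{congP} and \texttt{betSP'} from Section~\ref{model} — the remainder of the proof is a single contradiction in the underlying ordered field, obtained by ruling out the unique Euclidean candidate for $Y$ as being outside the open unit disk.
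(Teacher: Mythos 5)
Your proposal is mathematically correct, but it takes a genuinely different route from the paper. The paper does not refute A10' directly: it invokes the fact that Klein's model satisfies neutral geometry (A0--A9) together with the already-formalized equivalence of parallel postulates, and then refutes Tarski's A10 instead, using the points $(0,0)$, $(0,\tfrac{1}{2})$, $(\tfrac{1}{2},0)$, $(\tfrac{1}{4},\tfrac{1}{4})$, $(\tfrac{1}{2},\tfrac{1}{2})$; the heart of that argument is a ratio computation ($k_1\le\tfrac12$ forces the constructed $y'$ outside the disk) plus a uniqueness-of-intersection step to identify $y$ with $y'$. You instead attack Proclus head-on with six explicit points, reducing everything to the cross-product characterization of collinearity and one field computation showing the unique Euclidean candidate $(\tfrac32,\tfrac12)$ has squared norm $\tfrac52>1$. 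Your arithmetic checks out (all six witnesses lie in the open disk, and $\col{C}{D}{Y}\wedge\col{P}{Q}{Y}$ does force $Y=(\tfrac32,\tfrac12)$), and your route is more elementary and self-contained. What it costs is that you must establish the premise $\para{AB}{CD}$ against the SST definition of \texttt{Par}, which in the formal setting unfolds to $A\neq B$, $C\neq D$, coplanarity of the four points, and the non-existence of a common disk-point of the two chords; the last two are easy here (coplanarity is automatic in the 2D instance, and a common point would need ordinate both $0$ and $\tfrac12$) but are proof obligations you do not mention. What the paper's detour buys is precisely the avoidance of \texttt{Par}: A10 is stated purely in terms of betweenness, and its equivalence with A10' under A0--A9 was already mechanized, so only \texttt{bet'} needs to be reasoned about in the model. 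Both arguments rely on the same reflection infrastructure (the converse of \texttt{col\_2D}) to pass between \texttt{bet}-disjunctions and polynomial equations.
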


\begin{proof}
Since Klein's model forms a model of neutral geometry,\footnote{Neutral geometry is defined by the set of axioms of Euclidean geometry from which the parallel postulate has been removed.} it suffices to prove that any version of the parallel postulate, proven equivalent to A10' in Coq when assuming A0-A9, does not hold.
We choose to work with A10.
Picking $a$, $b$, $c$, $d$ and $t$ to be of coordinates $(0, 0)$, $(0, \frac{1}{2})$, $(\frac{1}{2}, 0)$, $(\frac{1}{4}, \frac{1}{4})$ and $(\frac{1}{2}, \frac{1}{2})$, some computations allow to verify that $\bT{\# a}{\# d}{\# t}$, $\bT{\# b}{\# d}{\# c}$ $b \neq d$, $d \neq c$ and $\lnot \col{\# a}{\# b}{\# c}$.
So, to prove that this version does not hold, it is enough to show that for any $x$ and $y$ such that $x$ lies inside the unit disk, $\bT{\# a}{\# b}{\# x}$, $\bT{\# a}{\# c}{\# y}$ and $\bT{\# x}{\# t}{\# y}$, it holds that $y$ is not a \texttt{Point}, meaning that it lies outside the unit disk.
Let us first eliminate the case where $b = x$ as it would lead to a contradiction.
Here, we use the algebraic characterization of collinearity\footnote{Here we use the converse of \texttt{col\_2D}.} to obtain that, if $b = x$, the ordinate of x would need to be equal to both $0$ and $\frac{1}{2}$ which is impossible.
Now let us pose $b'$ to be the vector $x + a - b$.
It is an easy matter to check that $\bT{\# a}{\# b'}{\# x}$ so let us pose $k_1$ to be the ratio associated to this betweenness.
We can verify that $k_1 \le \frac{1}{2}$ since $x$ is supposed to belong to the unit disk.
We can then take $d'$ at ratio $k_1$ from $a$ to $t$.
Applying what was proven to show that A10' holds in Cartesian planes over a Pythagorean ordered field, we can show that $y'$ at ratio $\frac{1}{k_1}$ from $a$ to $c$ is such that $\bT{\# a}{\# c}{\# y'}$ and $\bT{\# x}{\# t}{\# y'}$.
If we can prove that $y = y'$ we will be done as $y'$ lies outside of the unit disk because $k_1 \le \frac{1}{2}$ so $2 \le \frac{1}{k_1}$.
Finally, to prove that $y = y'$ we can reason by uniqueness of the intersection of lines which is valid in neutral geometry.
\end{proof}

\section{Conclusion}

We defined ten out of the eleven counter-models present in Gupta's thesis~\cite{gupta_contributions_1965}, thus obtaining the Coq formal proof of the independence of ten out of the thirteen axioms of the system presented in Sec.~\ref{independent-tarski}.
This seems to indicate that Pasch's axiom could indeed be split into two meaningfully different parts as done in this paper while still having an independent system.
However, we will only be sure of this once we will have formalized the missing three counter-models.
These can be found in Gupta's thesis~\cite{gupta_contributions_1965}, Szczerba's paper~\cite{szczerba_independence_1970}, and Beeson's section \textit{The recursive model} in~\cite{beeson2015a}.

Five of the formalized models are finite and the other five are modifications of the model presented in Sec.~\ref{model}.
We highlight that, for the latter five, A11' is not verified for the same reason as for the model from Sec.~\ref{model}..
All these models are available in the \textit{GeoCoq} library\footnote{\url{http://geocoq.github.io/GeoCoq/}} and represent about 4k lines of formal proof.

We are currently extending this work by proving the independence of a more constructive version\footnote{We replace point equality decidability by point equality \enquote{stability}, namely $\forall X Y, \lnot \lnot X = Y \imp X = Y$, which allows to prove equality of points by contradiction but does not allow case distinctions. We do not go as far as in~\cite{beeson_brouwer_2017} where not even \enquote{stability} is assumed. We also apply the same modifications made to obtain what is called \enquote{continuous Tarski geometry} in~\cite{beeson2015b}.} of the axioms which would also allow to capture $n$-dimensional geometry.
For this extension we could not rely on A9$^{(n)}$ from~\cite{givant99}.
Indeed, we found that it can only be assumed as an upper $n$-dimensional axiom when $n =2$ or $3$.
A9$^{(n)}$ is stated as follows.
$$ \bigwedge\limits_{1 \le i \le j \le n} P_i \neq P_j \land \bigwedge\limits_{i = 2}^n \congT{A P_1}{A P_i} \land \bigwedge\limits_{i = 2}^n \congT{B P_1}{B P_i} \land \bigwedge\limits_{i = 2}^n \congT{C P_1}{C P_i} \imp \col{A}{B}{C}$$
By taking $P_i = (\cos \frac{2i \pi}{n}, \sin \frac{2i \pi}{n}, 0, \ldots, 0)$ for $1 \le i \le n$ then $(0, 0, x_3, x_4, \ldots, x_n)$ satisfies the premises for any $x_3, x_4, \ldots, x_n$ in the standard n-dimensional model while triplets of points of this form are not necessarily collinear.
The various modifications did not allow to reuse some of the counter-models already mechanized, so new ones are necessary.

We are convinced that using a proof-assistant is crucial when proving the independence of a system, where small changes in a statement are critical.
Actually, there was a typo in Gupta's counter-model for A2 and we just exhibited a problem with axiom A9$^{(n)}$ from~\cite{givant99}.
The \textit{GeoCoq} library also proved very useful as it allowed us to combine the algebraic and geometric\footnote{Such as the use of the uniqueness of the intersection in our proof of Theorem~\ref{klein-parallel}.} reasoning.\\

\textbf{Acknowledgments:} We would like to thank Marius Hinge for his contribution to the early stage of this work.

\bibliographystyle{eptcs}
\bibliography{biblio}

\end{document}